%

\documentclass[12pt]{article}
\usepackage{authblk}
\usepackage{amsmath, amsthm, amssymb}
\usepackage{tikz}
\usepackage[caption=false,font=footnotesize]{subfig}

\setlength{\oddsidemargin}{27mm}
\setlength{\evensidemargin}{27mm}
\setlength{\hoffset}{-1in}

\setlength{\topmargin}{27mm}
\setlength{\voffset}{-1in}
\setlength{\headheight}{0pt}
\setlength{\headsep}{0pt}

\setlength{\textheight}{235mm}
\setlength{\textwidth}{155mm}

\pagestyle{empty}

\newcommand{\xN}{x^{N}}
\newcommand{\XN}{X^{N}}
\newcommand{\yN}{y^{N}}
\newcommand{\YN}{Y^{N}}
\newcommand{\calX}{\mathcal{X}}
\newcommand{\calY}{\mathcal{Y}}
\newcommand{\zN}{z^{N}}
\newcommand{\ZN}{Z^{N}}
\newcommand{\calZ}{\mathcal{Z}}
\newcommand{\define}{\stackrel{\Delta}{=}}

\newcommand{\Pe}{P_{\mathcal{E}}}
\newtheorem{theorem}{Theorem}
\allowdisplaybreaks
\begin{document}
\baselineskip 12pt

\begin{center}
\textbf{\Large Information Theoretical Analysis of Identification based on Active Content Fingerprinting} \\
\vspace{5mm}
\begin{tabular}{ccc}  Farzad~Farhadzadeh$^1$  & Frans~M.J.~Willems$^1$ & Sviatoslav~Voloshinovskiy$^2$ \\[1.2\baselineskip]
\multicolumn{2}{c}{$^1$ Eindhoven University of Technology} & $^2$ University of Geneva\\
\multicolumn{2}{c}{Electrical Enginering Dept.} & Computer Science
Dept.\\
\multicolumn{2}{c}{Eindhoven, The Netherlands} & Geneva, Switzerland\\[1.2\baselineskip]
 \verb+f.farhadzadeh@tue.nl+ & \verb+f.m.j.willems@tue.nl+ &
 \verb+svolos@unige.ch+
\end{tabular} \end{center}

\begin{abstract}
  Content fingerprinting and digital watermarking are techniques that are used for content protection and distribution monitoring. 
  Over the past few years, both techniques have been well studied and their shortcomings understood. 
  Recently, a new content fingerprinting scheme called {\em active content fingerprinting} was introduced to overcome these shortcomings. 
  Active content fingerprinting aims to modify a content to extract robuster fingerprints than the conventional content fingerprinting. 
  Moreover, contrary to digital watermarking, active content fingerprinting does not embed
  any message independent of contents thus does not face host interference. 
  
  The main goal of this paper is to analyze fundamental limits of active content fingerprinting in an information theoretical framework. 
\end{abstract}

\section{Introduction}
Generally, identification systems \cite{Willet:2003} are facing numerous requirements related to identification rate, complexity, privacy, security as well as memory storage. 
To address the trade-off between these conflicting requirements, \emph{content fingerprints} are used \cite{Fridrich:1999}, \cite{Kalker:2002}. 
A content fingerprint is a short, robust and distinctive content description. 

In conventional content fingerprinting, the fingerprint is extracted directly from an original content and does not require any content modification that preserves the original content quality. 
In this sense, it can be considered as a {\em passive content fingerprinting} (PCFP). 
The extracted fingerprints resemble random codes, for which no efficient decoding algorithm is known as for structured codes. 
Moreover, the performance of PCFP in terms of identification rate is not satisfactory due to acquisition device imperfections.

For these reasons, \emph{active content fingerprinting} (ACFP) was proposed in \cite{Slava:WIFS:2012}, \cite{Farzad:ICASSP:2013}, where the basic idea was introduced and a feasibility study revealed higher performance w.r.t. PCFP and digital watermarking. 
ACFP by modifying digital contents takes the best from two worlds of content fingerprinting and digital watermarking to overcome some of fundamental restrictions of these techniques in terms of performance and complexity. 
In the proposed fingerprinting scheme, contents are modified in a way to increase the identification rate and reduce the decoding complexity with respect to conventional content fingerprinting.  

The main goal of this sequel is to analyze ACFP in an information theoretical
framework to investigate its fundamental limits in 
identification systems.  
In this paper, we investigate the identification capacity based on ACFP in which a
modified content can be modelled as an output of a discrete memoryless channel 
with an original content as its input. 
Moreover, we investigate the optimal encoding scheme under the assumptions that content sequences can be modeled as a Gaussian memoryless source and the observation channel as an additive white Gaussian. 
And, we introduce an optimal scheme that can achieve the identification capacity based on ACFP.

The rest of this paper is organized as follows.
Section~\ref{sec:model_desc_gen} presents the identification system based on active content fingerprinting  and we will state our main result.
Section~\ref{sec:proof} contains the proof of this result.
Concluding remarks follow in Section \ref{sec:conclusions}.

\textbf{Notations:} 
Throughout this paper, we adopt the convention that a scalar random variable is denoted by a capital letter $X$, a specific value it may take is denoted by the lower case letter $x$, and its alphabet is designated by the script letter $\mathcal{X}$. 
As for vectors, a capital letter $\XN$ with a corresponding superscript will denote an $N$-dimensional random vector $\XN = (X_1, \ldots, X_N)$. 
A lower case letter $\xN$ will represent its particular realization $\xN= (x_1, \ldots, x_N)$.
The expectation operator is designated by $E[\cdot]$. 

\section{Model Description}
\label{sec:model_desc_gen}
\begin{figure}[h]
\centering
\begin{tikzpicture}
  \begin{scope}
   \draw (-7,3) rectangle node{$Q_s(x)$} (-5.75,4) ;
  \draw[-latex] (-5.75,3.5) -- node[above]{$\XN(1)$} (-4.,3.5);
  \draw[] (-4,3) rectangle node{$\YN=e(\XN)$} (-1.5,4);
  \draw[-latex] (-1.5,3.5) -- node[above]{$\YN(1)$} (0.25,3.5);
  \end{scope}
  \begin{scope}[yshift=-1.5cm]
   \draw (-7,3) rectangle node{$Q_s(x)$} (-5.75,4) ;
  \draw[-latex] (-5.75,3.5) -- node[above]{$\XN(2)$} (-4.,3.5);
  \draw[] (-4,3) rectangle node{$\YN=e(\XN)$} (-1.5,4);
  \draw[-latex] (-1.5,3.5) -- node[above]{$\YN(2)$} (0.25,3.5);
  \end{scope}
  \begin{scope}[yshift=-3.5cm]
  \draw (-7,3) rectangle node{$Q_s(x)$} (-5.75,4) ;
  \draw[-latex] (-5.75,3.5) -- node[above]{$\XN(M)$} (-4.,3.5);
  \draw[] (-4,3) rectangle node{$\YN=e(\XN)$} (-1.5,4);
  \draw[-latex] (-1.5,3.5) -- node[above]{$\YN(M)$} (0.25,3.5);
  \end{scope}
  \node at (-6.375,1) {$\vdots$};
  \node at (-2.75,1) {$\vdots$};
  \begin{scope}[yshift=0.75cm]
  \draw[-latex] (-7,-3.5) -- node[above]{$W$} (-5.75,-3.5);
  \draw (-5.75,-4) rectangle node{$\YN=s(W)$} (-3.5,-3);
  \draw[-latex] (-3.5, -3.5) -- node[above]{$\YN(W)$} (-1.75,-3.5);
  \draw (-1.75,-4) rectangle node[]{$Q_c(z \mid y)$} (0,-3);
  \draw[-latex] (0,-3.5) -- node[above]{$\ZN$} (1.25,-3.5);
  \draw (1.25,-4) rectangle node{$\widehat{W}=d(\ZN)$} (3.5,-3);
  \draw[-latex] (3.5,-3.5) -- node[above]{$\widehat{W}$} (4.75,-3.5);
  \draw[-latex] (-5.5, -2.25) node[right]{$\YN(1)$} -- (-5.5, -3);
  \draw[-latex] (-4, -2.25) node[right]{$\YN(M)$} -- (-4, -3);
  \node at (-4.75, -2.75) {$\cdots$};
  \draw[-latex] (1.5, -2.25) node[right]{$\YN(1)$} -- (1.5, -3);
  \draw[-latex] (3, -2.25) node[right]{$\YN(M)$} -- (3, -3);
  \node at (2.25, -2.75) {$\cdots$};
  \end{scope}
\end{tikzpicture}
\caption{Model of an identification system using modified content-sequences.}
\label{fig:system}
\end{figure}
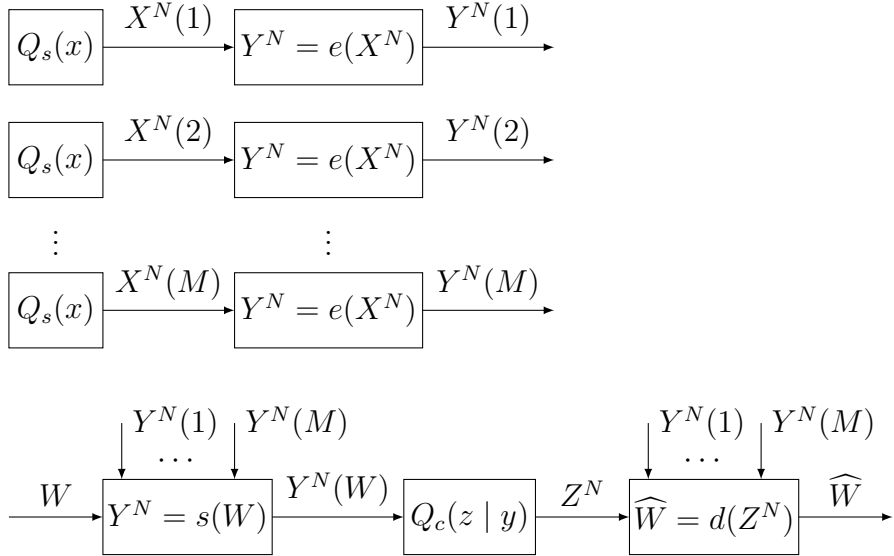

In an identification system, see Fig.~\ref{fig:system}, there are $M$ items indexed $w\in \{1,2, \cdots, M\}$ to be identified.
A randomly generated content-sequence (vector) of length $N$ corresponds to each item. 
This sequence has symbols $x_n,n=1,2,\cdots,N$ taking values in the discrete alphabet $\calX$, and the probability that content-sequence $\xN=(x_1,x_2,\cdots,x_N)$ occurs for item $w$ is
\begin{equation}
  \label{eq:src_dist}
\Pr\{\XN(w)=\xN\} = \prod_{n=1}^N Q_s(x_n),
\end{equation}
hence the components $X_1,X_2, \cdots, X_N$ are independent and identically distributed according to $\{Q_s(x), x\in\calX\}$. 
Note that this probability does not depend on the index $w$. 

An encoder $e(\cdot)$ transforms each content-sequence $\xN$ into a modified content-sequence $\yN=(y_1,y_2,\cdots,y_N)$, where $y_n,n=1,2,\cdots,N$ taking values in the discrete alphabet $\calY$.  
The distortion between modified content sequence and content-sequence cannot be too large.
The modification distortion $\overline{D_{xy}}$ is defined as 
\begin{equation}
  \overline{D_{xy}} = \frac{1}{N} E \left[ \sum_{n=1}^{N} D_{xy} (X_n , Y_n) \right],
  \label{eq:dist_def}
\end{equation} 
where $\{D_{xy} (x, y), x \in \calX , y \in \calY \}$ is the distortion matrix specifying the distortion per component. 
We assume that the distortion matrix has only finite non-negative entries. 
Moreover, we assume that all modified content-sequences are generated prior to the identification procedure. 
These modified sequences form a codebook that we call the ``database'' here. 
This database $C$ consists of the list of entries, hence
\begin{equation}
C =   \left( \yN(1),\yN(2),\cdots,\yN(M) \right).
\end{equation}

In the identification process the probabilities for the items to be presented for identification are all equal, hence
\begin{equation}
\Pr\{W=w\} = 1/M \mbox{ for } w\in \{1,2,\cdots,M\}.
\end{equation}
When item $w$ is presented for identification, its corresponding modified content-sequence $\yN(w)$ is ``selected'' from the database $C$ and presented to the system, hence
\begin{equation}
\yN = s(w).
\end{equation}
The system observes $\yN$ via a memoryless observation channel $\{Q_c(z|y), y\in\calY,z\in\calZ\}$, with discrete alphabet $\calZ$, and the resulting channel output sequence is $\zN=(z_1,z_2,\cdots,z_N)$, where $z_n \in \calZ$ for $n=1,2,\cdots,N$.
Now
\begin{equation}
\Pr\{\ZN=\zN|\YN(w)=\yN \} = \Pi_{n=1}^N Q_c(z_n|y_n).
\end{equation}
After observing $\ZN$, the decoder decides that $\ZN$ is related to which
modified content-sequence. 
If this is $\YN(w)$, the decoder outputs $\widehat{W}=w$.
The reliability of our identification system is measured by the error probability 
\begin{equation}
  \Pe = \Pr \{\widehat{W} \neq W\}.
  \label{eq:er_prob}
\end{equation}
\subsection{Statement of Result}
An identification rate-distortion pair $(R,\Delta)$ is said to be achievable if for all $\epsilon>0$ there exist for all $N$ large enough an encoder and a decoder such that
\begin{align}
  \label{eq:ach_R_D}
  \overline{D_{xy}} &\leq \Delta + \epsilon, \nonumber \\
\log_2(M) &\geq N(R-\epsilon), \mbox{ and} \nonumber \\
  \Pr\{ \widehat{W} &\neq W\} \leq \epsilon.
\end{align}

We are now ready to state the main result of this submission, the proof follows in section III.
\begin{theorem}
  \label{thm:main}
  The region of achievable rate-distortion pair $(R,\Delta)$ for the identification system based on modifies content-sequence is given by
\begin{align}
\bigg\{ (R, \Delta) :\, & R \leq I(Y;Z), \nonumber \\
      &\Delta \geq \sum_{x,y} Q_s(x)P_t(y\mid x)D_{xy}(x,y), \nonumber \\
      &  \mbox{for } P(x,y,z) = Q_s(x)P_t(y \mid x) Q_c(z \mid y)\bigg\}.
\end{align}
\end{theorem}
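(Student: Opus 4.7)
The proof splits into a direct part (achievability) and a converse, and my strategy is standard random coding paired with a Fano-plus-data-processing converse. The essential structural feature of ACFP that I will exploit is that the items' contents $\XN(1),\dots,\XN(M)$ are drawn i.i.d.\ across $w$, so the stored database entries $\YN(1),\dots,\YN(M)$ are independent and identically distributed \emph{across items}. There is no message to hide inside $\YN(w)$, which is what makes a clean packing-lemma estimate possible and also what distinguishes this from a watermarking-style channel-coding problem.

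\textbf{Achievability.} Fix a test channel $P_t(y\mid x)$ with $\sum_{x,y}Q_s(x)P_t(y\mid x)D_{xy}(x,y)\le\Delta$, and fix $R<I(Y;Z)$ under the induced joint law. I will construct the database randomly by generating, for each $w$, the sequence $\YN(w)$ from $\XN(w)$ symbol by symbol through $P_t$; the encoder $e$ is then a random memoryless mapping. Per-item distortion concentrates at $\sum_{x,y}Q_s(x)P_t(y\mid x)D_{xy}(x,y)$ by the law of large numbers. The decoder seeks the unique $\hat w$ such that $(\YN(\hat w),\ZN)$ is jointly $\epsilon$-typical under $Q_Y(y)Q_c(z\mid y)$. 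For $w'\neq W$ the sequence $\YN(w')$ is generated from an independent copy of $\XN$ and is therefore independent of $\ZN$, while marginally i.i.d.\ $Q_Y$; a standard packing-lemma estimate bounds the probability that $(\YN(w'),\ZN)$ is jointly typical by $2^{-N(I(Y;Z)-\delta)}$. A union bound over the $M=2^{NR}$ items handles false-alarm events, and the joint AEP handles the event that the genuine pair fails to be typical.

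\textbf{Converse.} Suppose $(R,\Delta)$ is achievable; take a sequence of codes with $\Pe\to 0$. Fano's inequality gives $\log M\le I(W;\ZN)+1+\Pe\log M$, so $N(R-\epsilon)\le I(W;\ZN)+o(N)$. The Markov chain $W\to \YN(W)\to \ZN$ yields $I(W;\ZN)\le I(\YN;\ZN)$, and memorylessness of $Q_c$ with the usual chain-rule manipulation gives the single-letter bound $I(\YN;\ZN)\le\sum_{n=1}^N I(Y_n;Z_n)$. I then introduce a time-sharing variable $T$ uniform on $\{1,\dots,N\}$ and set $\tilde X=X_T$, $\tilde Y=Y_T$, $\tilde Z=Z_T$. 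Since each $X_n\sim Q_s$, the marginal $P(\tilde X=x)=Q_s(x)$; since $Q_c$ acts symbol-wise, $P(\tilde Z=z\mid\tilde Y=y,T=t)=Q_c(z\mid y)$ is independent of $t$, so the averaged law has the Markov form $\tilde X\to\tilde Y\to\tilde Z$ with $P_t(y\mid x):=P(\tilde Y=y\mid\tilde X=x)$. Independence of $T$ in the channel step then gives $\tfrac1N\sum_n I(Y_n;Z_n)=I(\tilde Y;\tilde Z\mid T)\le I(\tilde Y;\tilde Z)$, because conditioning on $T$ can only enlarge $H(\tilde Z)$ while leaving $H(\tilde Z\mid\tilde Y)$ unchanged. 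The distortion identity $E[D_{xy}(\tilde X,\tilde Y)]=\overline{D_{xy}}\le\Delta+\epsilon$ completes the converse.

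\textbf{Main obstacle.} I expect the most delicate step to be the converse, specifically verifying that the time-sharing construction genuinely produces the factorization $P(x,y,z)=Q_s(x)P_t(y\mid x)Q_c(z\mid y)$ asserted in the theorem: the Markov chain $\tilde X\to\tilde Y\to\tilde Z$ has to be derived from the per-coordinate memorylessness of $Q_c$ rather than assumed. The achievability side is comparatively routine once one recognizes that, because no message is being embedded, $\YN(w)$ has marginal $Q_Y^N$ and the database entries are fully independent across items, giving the clean packing bound.
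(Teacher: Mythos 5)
Your proposal is correct and follows essentially the same route as the paper: random coding through the test channel $P_t(y\mid x)$ with joint-typicality decoding for achievability (which the paper only sketches), and a Fano-plus-single-letterization converse using a uniform time index, which is exactly the paper's construction of $(X,Y,Z)=(X_I,Y_I,Z_I)$. The one repair worth making is in your first converse step: since the decoder uses the database, $\widehat{W}$ is a function of $(\ZN,\YN(1),\dots,\YN(M))$ and not of $\ZN$ alone, so Fano yields $\log_2 M\le I(W;\ZN,\YN(1),\dots,\YN(M))+F$ rather than $\log_2 M\le I(W;\ZN)+F$; the paper conditions on the database explicitly and uses $I(W;\YN(1),\dots,\YN(M))=0$ together with $H(\ZN\mid \YN(1),\dots,\YN(M),W)=H(\ZN\mid\YN(W))$, after which your single-letterization and time-sharing argument goes through unchanged.
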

The ``capacity'' of identification based on ACFP, the maximum of possible identification rate, for a given distortion $\Delta$ is
given by
\begin{equation}
  C_{\text{ACFP}}(\Delta) = \max_{P_t(y \mid x):\sum_{x,y} Q_s(x)P_t(y \mid x) D_{xy}(x,y) \leq \Delta} I(Y;Z).
\end{equation}

\section{Proof}
\label{sec:proof}
The proof consists of the achievability part and a converse.
We start with the converse.
\subsubsection{Converse Part}
First, we define the random variable $I$ that takes values in $\{1,2,\cdots,N\}$
with probability $1/N$. 
Then the random triple $(X,Y,Z)$ is defined as $(X,Y,Z)\define (X_i,Y_i,Z_i)$ if
$I=i$. 
Hence, the joint distribution of $(X,Y,Z)$ is given by 
\begin{align}
  P(x , y , z) &= \frac{1}{N} \sum_{i=1}^N \Pr\{X_i=x , Y_i=y , Z_i = z\} \nonumber \\
  &= \frac{1}{N} \sum_{i=1}^N Q_s(x_i)P_t(y_i \mid x_i)Q_c(z_i \mid y_i)
  \nonumber \\
    &= Q_s(x)P_t(y \mid x)Q_c(z \mid y)
  \label{eq:jointDist}
\end{align}

Consider the $M$ number of modified contents. 
Using Fano's inequality $H(\widehat{W}\mid W) \leq F$ where $F = 1 + \Pr\{\widehat{W} \neq W\} \log_2(M)$, we have the following series of (in)equalities:
\begin{align}
  \label{eq:bndM}
  log_2 (M) &= H(W) \nonumber \\ 
  &\leq I (W ; \ZN , \YN (1), \cdots , \YN (M)) + F \nonumber \\
  &= I(W;\YN (1), \cdots , \YN (M)) + I(W;\ZN \mid \YN (1), \cdots , \YN (M)) + F \nonumber \\
  &\stackrel{(a)}{\leq} H(\ZN) - H(\ZN \mid \YN (1), \cdots , \YN (M) , W ) + F \nonumber \\
  &= H(\ZN) - H(\ZN \mid \YN(W))  + F \nonumber \\
  &\leq \sum_{i=1}^N H(Z_i) - \sum_{i=1}^N H(Z_i \mid Y_i(W)) + F \nonumber \\
  &= \sum_{i=1}^N I(Y_i(W); Z_i) + F \nonumber \\
  &= NH(Z \mid I) - NH(Z \mid Y, I) + F \nonumber \\
  &\stackrel{(b)}{\leq} N I(Y;Z) + F,
\end{align}
where (a) and (b) follow from the facts that conditioning reduces entropy and
the channel is memoryless. 

Now for the distortion part we have
\begin{align}
  \overline{D_{xy}} &= \frac{1}{N} E\left[\sum_{n=1}^N D_{x,y}(X_n,Y_n)\right] \nonumber \\
  &=\frac{1}{N} \sum_{n=1}^N \sum_{x_n,y_n} Q_s(x_n)P_t(y_n\mid x_n)D_{x,y}(x_n,y_n) \nonumber \\
  &=\sum_{x,y} Q_s(x)P_t(y\mid x)D_{x,y}(x,y) \nonumber \\
  &= D_{xy}(X,Y).
  \label{eq:dist}
\end{align}

Now, assume that $(R,\Delta)$ is achievable. Then $F \leq 1 + \epsilon \log_2(M)$ and $\Delta \geq \overline{D_{xy}} - \epsilon$. 
For all blocklengths $N$ and small enough $\epsilon > 0$, we obtain from
\eqref{eq:bndM} that 
\begin{equation}
  \label{eq:bndC}
  N(R-\epsilon) \leq \log_2 (M) \leq \frac{1}{1-\epsilon} (N I(Y;Z)+1),
\end{equation} 
for some $P(x,y,z) = Q_s(x)P_t(y \mid x)Q_c(z \mid y)$. 
If we now let $\epsilon \downarrow 0$ and $N \rightarrow \infty$, then we obtain the
converse of Theorem \ref{thm:main} from \eqref{eq:dist} and \eqref{eq:bndC}.

\subsubsection{Achievability}
We can only give an outline of the achievability proof here. 
For each content-sequence $\XN(w)$, a modified content-sequence 
$\YN(w)$ is generated using conditional distribution $P_t(y \mid x)$. 
These modified sequences are codewords in a random codebook that guarantee a rate that can be as large as
$I(Y;Z)$. 
The distortion is as expected $(i.e. D_{xy}(X,Y) )$ because of the law of large
numbers.
\subsection{Gaussian Source}
\label{sec:Gauss_src_gen}
Let's assume the content-sequences are distributed i.i.d. according to a Gaussian distribution with variance $V_X$ and mean zero. 
Moreover, the observation channel $Q_c(z \mid y)$ can be modelled as an additive
white Gaussian noise (AWGN) with variance $V_N$. 
\begin{theorem}
Considering distortion as the mean-squared error, the capacity of identification
based on ACFP is given by
\begin{equation}
  C_{\text{ACFP}}(\Delta) = \frac12 \log_2 \left( 1+ \frac{(\sqrt{V_X} + \sqrt{\Delta})^2}{V_N} \right),
  \label{eq:gauss_cap}
\end{equation}
that can be achieved by scaling the content by a factor $f$, i.e. $\YN=f \XN$,
such that $(f-1)^2V_X=\Delta$.
\end{theorem}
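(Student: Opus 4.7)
The plan is to apply the extension of Theorem~\ref{thm:main} to continuous alphabets to the Gaussian source / AWGN observation channel with mean-squared error distortion, which reduces the claim to computing
\begin{equation*}
C_{\text{ACFP}}(\Delta) = \sup_{P_t(y\mid x):\, E[(X-Y)^2]\leq\Delta} I(Y;Z),
\end{equation*}
where $X\sim\mathcal{N}(0,V_X)$ and $Z=Y+N$ with $N\sim\mathcal{N}(0,V_N)$ independent of $Y$. I will combine two standard upper bounds and then verify that the proposed linear scheme $Y=fX$ makes both tight simultaneously.

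First, since $Z-Y=N$ is independent of $Y$, we have $h(Z\mid Y)=h(N)=\tfrac12\log_2(2\pi e V_N)$, and the maximum-entropy bound under a second-moment constraint gives $h(Z)\leq \tfrac12\log_2(2\pi e(V_Y+V_N))$ with equality iff $Z$ is Gaussian. Hence
\begin{equation*}
I(Y;Z) = h(Z)-h(N) \leq \tfrac12\log_2(1+V_Y/V_N),
\end{equation*}
so it remains to bound $V_Y$ in terms of $V_X$ and $\Delta$.

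Second, using $E[(X-Y)^2]=V_X+V_Y-2\,\mathrm{Cov}(X,Y)$ together with Cauchy--Schwarz $|\mathrm{Cov}(X,Y)|\leq\sqrt{V_XV_Y}$ gives
\begin{equation*}
\Delta \geq V_X+V_Y-2\sqrt{V_XV_Y} = (\sqrt{V_Y}-\sqrt{V_X})^2,
\end{equation*}
whence $V_Y\leq(\sqrt{V_X}+\sqrt{\Delta})^2$. Substituting into the previous display yields the stated capacity expression as an upper bound.

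For achievability I would take the deterministic encoder $Y=fX$ with $f=1+\sqrt{\Delta/V_X}$. Being a linear function of a Gaussian, $Y$ is centred Gaussian with variance $f^2V_X=(\sqrt{V_X}+\sqrt{\Delta})^2$; its modification distortion is $(f-1)^2V_X=\Delta$; and $\mathrm{Cov}(X,Y)=fV_X=\sqrt{V_XV_Y}$, so Cauchy--Schwarz is tight, while $Z=Y+N$ is Gaussian so the entropy bound is tight as well. Both inequalities become equalities, proving the formula. The main technical obstacle is importing Theorem~\ref{thm:main} from the discrete alphabet setting to the Gaussian one; this is handled by a standard fine-quantization argument, with the finite second moments of $Y$ and $N$ (guaranteed by the distortion constraint) ensuring that all relevant differential entropies are finite and that the limits of the discrete capacities converge to the Gaussian value derived above.
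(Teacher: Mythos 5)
Your proposal is correct and follows essentially the same route as the paper: the max-entropy bound $h(Z)\leq\tfrac12\log_2 2\pi e(V_Y+V_N)$, a bound on $V_Y$ via the distortion constraint, and achievability by the linear scaling $Y=fX$ with $(f-1)^2V_X=\Delta$. You are in fact somewhat more careful than the paper, since you make explicit the Cauchy--Schwarz step $\Delta\geq(\sqrt{V_Y}-\sqrt{V_X})^2$ that the paper only asserts, and you flag the discrete-to-continuous extension of Theorem~\ref{thm:main} that the paper passes over silently.
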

\begin{proof}
  We can upper bound the identification rate as follows
  \begin{align}
    I(Y;Z) &= h(Z) - h(Z \mid Y) \nonumber \\
    & = h(Z) - \frac12 \log_2 2\pi e V_N \nonumber \\
    & \leq \frac12 \log_2 2\pi e (V_Y+V_N) - \frac12 \log_2 2\pi e V_N \nonumber \\
    & \stackrel{(a)}{\leq} \frac12 \log_2 \left( 1+\frac{(\sqrt{\Delta} + \sqrt{V_X})^2}{V_N} \right),
    \label{eq:gauss_up}
  \end{align}
  where $V_Y = E[Y^2]$ and (a) follows from the fact that 
  \begin{equation*}
    E[(X-Y)^2] = V_Y+V_X-2E[XY] \leq \Delta.
  \end{equation*}
  $V_Y$ attains the maximum if the equality holds in the above equation and $Y$ is aligned
  with $X$, i.e. $\YN=f \XN$, such that $(f-1)^2V_X=\Delta$. 
  Note that by setting $\YN=f \XN$ equalities in \eqref{eq:gauss_up} hold. 
\end{proof}

Figure~\ref{fig:Rate_dist_gen} shows the capacity of identification systems using PCFP versus ACFP for different values of distortion $\Delta$. 

\begin{figure}[t]
  \centering
  \includegraphics[width=0.75\textwidth]{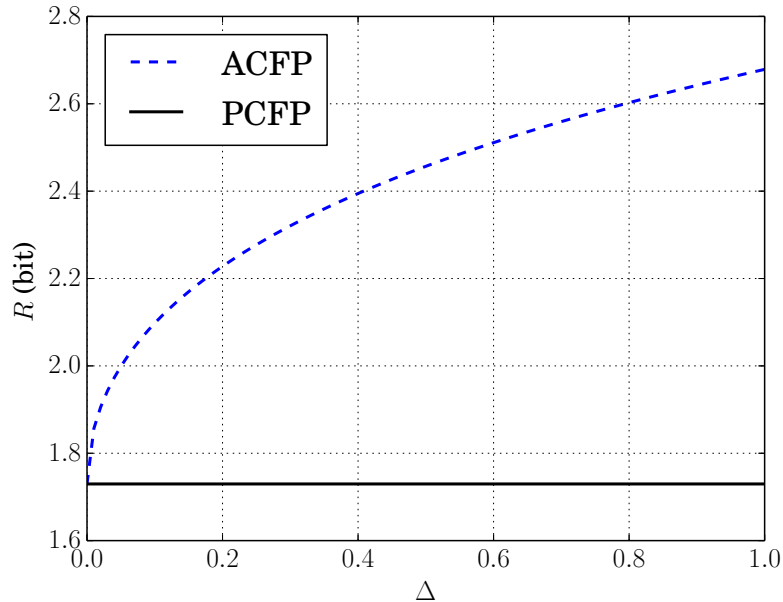}
  \caption{(a) the capacity of identification based on ACFP versus PCFP for a Gaussian source and AWGN with SNR=$10$ dB.}
\label{fig:Rate_dist_gen}  
\end{figure}

\section{Conclusions}
\label{sec:conclusions}
In this paper, we evaluated the capacity of identification systems based on active
content fingerprinting. 
In active content fingerprinting, the main goal is to modify a digital content to improve the performance in terms of identification rate. 
We assumed that the modification can be modeled by a memoryless channel. 
Then, we investigated the optimal encoding scheme under Gaussian assumptions that can achieve the identification capacity based on ACFP.

\bibliographystyle{IEEEtran}
\bibliography{acfp_it} 

\end{document}